\newtheorem{theorem}{Theorem}[]
\newtheorem{lemma}[]{Lemma}
\newtheorem{definition}{Definition}
\DeclareMathOperator*{\argmax}{arg\,max}
\title{Joint Power and Subcarrier Allocation in Multi-Cell Multi-Carrier NOMA}
\author{%
{Subhankar Banerjee, Chung Shue Chen,~\IEEEmembership{Senior Member,~IEEE},  Marceau Coupechoux, and Abhishek Sinha} 
\thanks{S. Banerjee is with University of Maryland, College Park, MD 20742, USA.
}
\thanks{C. S. Chen is with Bell Labs, Nokia Paris-Saclay, 91620 Nozay, France.
}%
\thanks{M. Coupechoux is with LTCI, Telecom Paris, Institut Polytechnique de Paris, France.
}%
\thanks{A. Sinha is with the Tata Institute of Fundamental Research, Mumbai, India.
}%
\vspace{-0.9cm} 
\thanks{Email address: 
sbanerje@umd.edu, 
chung\_shue.chen@nokia-bell-labs.com, marceau.coupechoux@telecom-paris.fr, abhishek.sinha.tifr@gmail.com}
}
\begin{document}

\maketitle
\begin{abstract}
Non-orthogonal multiple access (NOMA) is a technology proposed for next generation cellular networks because of its high spectral efficiency and enhanced user connectivity. However, 
in the literature the optimal joint power and sub-carrier allocation for NOMA has been proposed for single cell only. Consequently, a global optimal algorithm for the joint power and sub-carrier allocation  for NOMA system in multi-cell scenario is still an open problem. In this work, we propose a polyblock optimization based algorithm for obtaining a global optimal solution. It has reduced complexity due to a necessary and sufficient condition for feasible successive interference cancellation (SIC). Besides, we can adjust its optimization approximation parameter to serve as benchmark solution or to offer suitable practical solution for multi-cell
multi-carrier NOMA systems. Numerical studies have shown its effectiveness.  
\end{abstract}
\vspace{-0.3cm}  

\section{Introduction}
Orthogonal frequency division multiple access (OFDMA) is the downlink multiplexing scheme adopted by the 5G New Radio. 
In each cell, every sub-carrier is allocated to at most one user such that 
intra-cell interference is almost suppressed.  
However, OFDMA is known to be sub-optimal in terms of spectral efficiency \cite{tse2005fundamentals}. Power domain 
non-orthogonal multiple access (NOMA) is a capacity-achieving multiple access scheme based on successive interference cancellation (SIC) that has been proposed for 
future mobile 
networks with the goal of improving the spectral efficiency.  
In NOMA, unlike orthogonal multiple access (OMA) such as OFDMA, each sub-carrier can be allocated to possibly more than one user; 
multiple users with diverse power levels can be accommodated within the same resource block simultaneously with the aid of advanced transmission scheme (superposition coding) and progressive reception technique (SIC) \cite{ding2017survey}.  
  
The intra-cell interference occurs due to the fact that multiple users in the same cell can be allocated to the same sub-carrier. Note that the resource allocation optimization problem for single cell NOMA has been addressed extensively in the literature, see for example 
\cite{parida2014power, sun2016optimal, salaun2020joint} 
and the references therein.  
In this paper, we focus on 
the resource allocation optimization problem for multi-cell NOMA. 
We consider a frequency reuse~1 system, i.e., every cell shares the same sub-carriers when performing NOMA, resulting in inter-cell interference within the system. The presence of inter-cell interference with intra-cell interference in such multi-cell NOMA system makes the resource allocation problem more difficult. 

In 
\cite{nguyen2017precoder}, the authors addressed  the uplink precoder design optimization problem for multi-cell MIMO-NOMA system and performed sum rate maximization using an
approximate algorithm. 
In 
\cite{qian2009mapel, kim2015sum, sun2016optimal}, the authors 
employed monotonic optimization based 
methods which 
aim to reach a 
global optimal solution. 
The paper \cite{qian2009mapel} addressed  
the global optimal power control problem in wireless networks over 
multiple interfering 
links. 
In 
the paper \cite{kim2015sum}, the power and sub-carrier allocation problem for sum-rate maximization in multi-cell OFDMA systems has been investigated. 
Note that in \cite{sun2016optimal, salaun2020joint},  
the optimal power
and sub-carrier allocation in NOMA has been formulated, however for a single cell. 

In \cite{8352643}, the authors consider a multi-cell single carrier CoMP-NOMA system and provide a 
power allocation solution 
among all the cells. 
In \cite{fu2017distributed}, 
the power minimization problem for a downlink NOMA multi-cell system subject to 
user minimum data rate 
requirement is addressed. 
However, 
the sum rate 
maximization problem is not investigated. 
In \cite{yang2018power}, 
heuristic algorithms for the power allocation in multi-cell multi-carrier NOMA systems for 
sum power minimization as well as sum rate maximization have been devised. 

To the best of our knowledge, so far no one has proposed a global optimal solution for the multi-cell multi-carrier NOMA join power and sub-carrier allocation problem for the sum rate maximization. Here, we aim to provide an optimal algorithm to solve the above problem under multi-cell NOMA. We propose a polyblock optimization based algorithm, which has been used extensively in many resource allocation problems, see for example \cite{qian2009mapel, kim2015sum}. We extend this method to solve NOMA resource allocation optimization problem in multi-cell setup. 
{In the meantime, we derive the necessary and sufficient condition for feasible SIC in multi-cell NOMA, which can be used for the class of problems to reduce the optimization complexity.} 


\section{System Model}
\label{s_model}

In this work, we 
consider a 
multi-cell multi-carrier NOMA 
downlink 
system, which consists of $K$ base stations (BS) denoted by the set $\mathcal{K}$. The set of users served by a BS $k\in\mathcal{K}$ is denoted as $\mathcal{M}_k$ (with cardinality $M_k = |\mathcal{M}_k|$). There is a total number of $L$ sub-carriers in the network denoted by the set $\mathcal{L}$. Note that we consider a frequency reuse~1 system such that 
all the sub-carriers are available to all the BS. 
If a sub-carrier $l$ is allocated 
to user $u$ by BS $k$, where $u\in{\mathcal{M}_{k}}$, we set $a_{k,u}^l=1$, and $a_{k,u}^l=0$ otherwise. Let $p_{k,u}^l\geq 0$ be the power allocated by BS $k$ to user $u\in\mathcal{M}_k$ on sub-carrier $l$ and $p_k^l$ be the total power transmitted by BS $k$ on sub-carrier $l$, i.e., $p_k^{l} = \sum_{u\in{\mathcal{M}_k}} p_{k,u}^{l}$. 
We 
define 
a vector $\mathbf{a}$ 
as follows\footnote{First, we fill all the entries of the first BS, followed by the entries of the second BS and 
continue until the last BS. 
Among the entries of 
each BS, first 
we fill the entries corresponding to the first sub-carrier, followed by the entries of the second sub-carrier and continue until the $L$-th sub-carrier.
Among the entries of 
each sub-carrier of a BS, we fill the entries corresponding to all the users in that BS.}:\\ 
\vspace{-0.5cm}
\begin{eqnarray}
\mathbf{a} &=& (
a_{1,1}^1,\cdots,a_{1,M_1}^1,
\cdots,a_{1,M_1}^2, \cdots,
\cdots,a_{1,M_1}^L, 
\cdots,\notag \\
&& a_{K,1}^1,\cdots,a_{K,M_K}^1,
\cdots,a_{K,M_K}^2, \cdots,
\cdots,a_{K,M_K}^L
). \notag
\end{eqnarray}
We have
$a_{k,u}^{l} \in \{0,1\}$, for all ${k\in{\mathcal{K}}},u\in{\mathcal{M}_{k}},l\in{\mathcal{L}}$.
We 
define a vector $\mathbf{p}$ and order the $p_{k,u}^{l}$, for all ${k\in{\mathcal{K}}},u\in{\mathcal{M}_{k}},l\in{\mathcal{L}}$, in the same manner. 
Both 
$\mathbf{a}$ and $\mathbf{p}$ 
have length equal to 
$\sum_{k=1}^{K} M_k L$.  


We 
consider 
the use of a SIC based receiver \cite{ding2017survey}. Such a receiver is characterized by 
its decoding order of the received signals. 
The decoding order among the users of BS $k$ on sub-carrier $l$ is defined by the vector $\pi_{k}^{l} = (\pi_{k}^{l}(1), \pi_{k}^{l}(2), \cdots, \pi_{k}^{l}(M_k))$, where $\pi_k^l(i)$ is the $i$-th user to be decoded. In particular, user $\pi_k^l(i)$ is able to decode and subtract the signal of users $\pi_k^l(1)$ to $\pi_k^l(i-1)$ and treats the signal from users $\pi_k^l(i+1)$ to $\pi_k^l(M_k)$ in cell $k$ on sub-carrier $l$ as interference. 
Notice that 
there is also inter-cell interference, which should be taken into account when other cells are using the same sub-carrier.
The vector $\pi_{k}^{l}$ can thus be seen as a function which maps the decoding order to the user index. Using this formalism, $(\pi_{k}^{l})^{-1}$ is the inverse function, which maps the user index to the corresponding decoding order.

With 
the above notations, the signal-to-interference-plus-noise ratio (SINR) for a user $u$ served by a BS $k$ on sub-carrier $l$ can be written as:
\begin{equation}\label{eq:gamma}
    \gamma_{k,u}^l=\frac{g_{k,u}^l
    p_{k,u}^l}{\sum_{i=(\pi_{k}^{l})^{-1}(u) + 1}^{M_{k}} g_{k,u}^{l} p_{k,\pi_{k}^{l}(i)}^{l} \hspace{-0.1cm} +  \sum_{j\in{\mathcal{K}}\backslash{\{k}\}} g_{j,u}^{l} p_{j}^{l} \hspace{-0.1cm} + N_{k,u}^{l}}
\end{equation}
where $g_{k,u}^l$ is the
link gain between BS $k$ and user $u$ on sub-carrier $l$, and $N_{k,u}^{l}$ is the power of the noise for user $u$ in cell $k$ and sub-carrier $l$.  
For the simplicity of discussion, in this paper, we would 
assume that the noise power is constant across users, sub-carriers and 
BS, i.e., $N_{k,u}^l= N$. 


We construct 
a vector $\boldsymbol{\gamma}$ from the $\gamma_{k,u}^l$
in the same manner as we did for 
$\mathbf{a}$ and $\mathbf{p}$. 
We thus have the three following vectors: 
\begin{eqnarray}\label{power:vec}
\mathbf{p}&=&[p_{k,u}^l]_{k\in\mathcal{K},l\in\mathcal{L},u\in\mathcal{M}_k}, \\
\mathbf{a}&=&[a_{k,u}^l]_{k\in\mathcal{K},l\in\mathcal{L},u\in\mathcal{M}_k}, \\
\boldsymbol{\gamma}&=&[\gamma_{k,u}^l]_{k\in\mathcal{K},l\in\mathcal{L},u\in\mathcal{M}_k}. 
\end{eqnarray} 

The maximum 
transmit power of a BS $k$ on sub-carrier 
$l$ is denoted by $\bar{p}_k^l$ 
such that we have the following constraint:
\begin{equation} \label{eq:scpowerconst}
    0\leq \sum_{u\in\mathcal{M}_k}p_{k,u}^l\leq \bar{p}_k^l, \forall k\in\mathcal{K}, l\in\mathcal{L}.
\end{equation}
We denote the maximum total transmit power of   
a BS $k$ 
for all the sub-carriers by $\bar{p}_k$, so that we have the following cellular power constraint \cite{salaun2020joint}: 
\begin{equation} \label{eq:bspowerconst}
    0\leq \sum_{l\in\mathcal{L}} \bar{p}_k^l \leq \bar{p}_k, \forall k\in\mathcal{K}.
\end{equation}

Because of SIC practical constraints due to decoding complexity and potential error propagation, 
we assume that there is a limitation on the maximum number of users that we can multiplex  
in 
each sub-carrier,  
denoted by $M$, i.e., 
\begin{equation} \label{eq:sicconst}
    \sum_{u\in\mathcal{M}_k}a_{k,u}^l \leq M, \forall k\in\mathcal{K}, l\in\mathcal{L}.
\end{equation}


 
\subsection{Conditions for Feasible SIC in Multi-cell NOMA}
\label{sub_SIC_Constraints}

We consider a fixed rule for the SIC decoding order 
and follow the same SIC ordering as it is done for single cell NOMA\footnote{This means that we do not optimize the SIC ordering for multi-cell. This aspect is known to be an open problem and left for future work.} 
in the literature  \cite{tse2005fundamentals}: users are sorted in the increasing order of their 
link gains;
a user with the 
smallest 
link gain is decoded first, 
whereas 
a user with the 
largest link gain
is decoded at the end of the decoding process. That is, 
a weak user (a user with lower link gain) decodes its signal 
and treats all other signals as interference, while a strong user 
(with higher link gain) can first 
decode a weak user's signal to remove it and then decodes its own data. 

Nevertheless, a constraint arises on the SIC ordering, which is specific to 
multi-cell scenario and 
a condition for SIC to be feasible. 
Consider the following example, a system with two cells and each cell with two users. Let's call the two users in cell $1$ as user $1$ and user $2$, while the two users in cell $2$ as user $3$ and user $4$. 
We do our analysis 
by considering that they are co-channel interferers (says $l = 1$).  
The power allocated for these users are $p_{1,1}^{1}$, $p_{1,2}^{1}$, $p_{2,3}^{1}$ and $p_{2,4}^{1}$, respectively. 
Consider that $g_{1,2}^{1} > g_{1,1}^{1}$ and $g_{2,4}^{1}>g_{2,3}^{1}$. 
Before SIC, the SINR for user $1$ is $\frac{g_{1,1}^{1} p_{1,1}^{1}}{g_{1,1}^{1} p_{1,2}^{1} + g_{2,1}^{1} p_{2}^{1}+N}$ and the 
SINR for 
user $2$ is 
$\frac{g_{1,2}^{1} p_{1,2}^{1}}{g_{1,2}^{1} p_{1,1}^{1} + g_{2,2}^{1} p_{2}^{1}+N}$, according to  \eqref{eq:gamma}. 
After SIC, 
the 
SINR for 
user $1$ (i.e., weak user) 
would be the same, 
whereas the SINR for 
user $2$ (i.e., strong user) 
would become $\frac{g_{1,2}^{1} p_{1,2}^{1}}{g_{2,2}^{1} p_{2}^{1} + N}$ since 
user $2$ can first decode 
user $1$'s signal and then remove it. 
However, it should be noted that user $2$ can only decode 
user $1$'s signal from the received 
signal if and only if the following condition holds: \vspace{-0.2cm}
\begin{equation} 
    \frac{g_{1,2}^{1} p_{1,1}^{1}}{g_{1,2}^{1} p_{1,2}^{1} + g_{2,2}^{1} p_{2}^{1}+N} > \frac{g_{1,1}^{1} p_{1,1}^{1}}{g_{1,1}^{1} p_{1,2}^{1} + g_{2,1}^{1} p_{2}^{1}+N}. 
\end{equation}
By simplifying the above expression, we can obtain  $(g_{1,2}^{1} g_{2,1}^{1} - g_{1,1}^{1} g_{2,2}^{1}) p_{2}^{1} + (g_{1,2}^{1} - g_{1,1}^{1}) N \geq 0$. 
We state the above 
observation for general setting as the following theorem.

 \begin{theorem} \label{th:sicconstraint}
 For any two users in $\mathcal{M}_k$ served by BS $k$ on a sub-carrier $l$ such that $g_{k,1}^l<g_{k,2}^l$, a necessary and sufficient condition for user 2 (a strong user) to be able to remove user~1's signal (a weak user) is given by:  
 \begin{equation}\label{eq:sicconst1}
  \left( \sum_{i\in\mathcal{K}\backslash\{k\}}\hspace{-0.1cm} \left(g_{k,2}^l g_{i,1}^l - g_{k,1}^l g_{i,2}^l\right)p_i^l \right) + \left(g_{k,2}^{l} - g_{k,1}^{l}\right)N  \geq 0. 
 \end{equation} 
 \end{theorem}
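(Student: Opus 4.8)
The plan is to reduce the decodability requirement to a comparison of two SINRs and then to simplify the resulting inequality algebraically. The starting point is the information-theoretic principle underlying SIC: user~1's message is transmitted at the rate $\log_2(1+\gamma_{k,1}^l)$ dictated by its own received SINR, so any receiver other than user~1's can decode that message if and only if the SINR at which it observes user~1's signal is at least $\gamma_{k,1}^l$. Applying this to user~2, which must decode and cancel user~1's signal before decoding its own, feasibility is equivalent to
\[
\frac{g_{k,2}^l\, p_{k,1}^l}{I_2 + N} \;\ge\; \frac{g_{k,1}^l\, p_{k,1}^l}{I_1 + N},
\]
where $I_1$ (resp. $I_2$) denotes the interference seen by user~1 (resp. user~2) at the instant user~1's signal is being decoded.

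The second step is to evaluate $I_1$ and $I_2$ from \eqref{eq:gamma}. Since the decoding order is the common, link-gain--based order at every receiver, when user~2 reaches the decoding step of user~1 it has already removed exactly the same set of lower-ranked co-cell users that user~1 has removed when decoding its own signal. Writing $S=\sum_{i=(\pi_k^l)^{-1}(1)+1}^{M_k} p_{k,\pi_k^l(i)}^l$ for the aggregate power of the co-cell users decoded after user~1, we obtain $I_1 = g_{k,1}^l S + \sum_{j\in\mathcal{K}\backslash\{k\}} g_{j,1}^l p_j^l$ and $I_2 = g_{k,2}^l S + \sum_{j\in\mathcal{K}\backslash\{k\}} g_{j,2}^l p_j^l$. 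We may assume $p_{k,1}^l>0$, since the statement is vacuous otherwise, so $p_{k,1}^l$ cancels from both sides; and because $N>0$ both denominators are strictly positive, so cross-multiplication is an equivalence rather than a one-way implication.

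The final step is the algebra. After cross-multiplying, the term $g_{k,1}^l g_{k,2}^l S$ appears on both sides and cancels, so the intra-cell contribution disappears entirely, leaving
\[
g_{k,2}^l \sum_{j\in\mathcal{K}\backslash\{k\}} g_{j,1}^l p_j^l + g_{k,2}^l N \;\ge\; g_{k,1}^l \sum_{j\in\mathcal{K}\backslash\{k\}} g_{j,2}^l p_j^l + g_{k,1}^l N,
\]
and moving everything to one side and collecting the $p_j^l$ terms gives precisely \eqref{eq:sicconst1}. Since each manipulation used—the SINR comparison, the cancellation of $p_{k,1}^l$, and the cross-multiplication—is an equivalence, the derived inequality is simultaneously necessary and sufficient, which is the claim.

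I expect the only delicate point to be the first step: justifying that decodability of user~1's codeword at user~2 is \emph{equivalent} to, not merely implied by, the SINR inequality. This rests on user~1's codebook rate being exactly $\log_2(1+\gamma_{k,1}^l)$, so that an SINR observed at user~2 strictly below $\gamma_{k,1}^l$ makes decoding impossible. The interference bookkeeping in the second step is routine once the fixed decoding order is invoked—the key point being that the common factor $g_{k,1}^l g_{k,2}^l S$ cancels no matter how many users share the sub-carrier—and the algebra in the third step is immediate.
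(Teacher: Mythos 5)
Your proof is correct and takes essentially the same approach as the paper: both reduce SIC feasibility to the requirement that the SINR at which user~2 observes user~1's signal (numerator $g_{k,2}^l p_{k,1}^l$, interference from users decoded after user~1 plus inter-cell terms) be at least user~1's own SINR, and then simplify, with the common intra-cell term $g_{k,1}^l g_{k,2}^l S$ cancelling to leave exactly \eqref{eq:sicconst1}. Your write-up merely makes explicit the cross-multiplication, the cancellation of $p_{k,1}^l$, and the fact that each step is an equivalence, which the paper compresses into ``which leads to \eqref{eq:sicconst1} after simplification.''
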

 \begin{proof}
\vspace{-0.2cm} 
See Appendix~\ref{app:sicconstraint}. 
 \end{proof}

Following the result of Theorem~\ref{th:sicconstraint}, we can see that 
for sub-carrier $l$ and BS $k$, there are $M_k-1$ corresponding constraints (necessary and sufficient condition) for SIC to be feasible and 
hence a total of $\sum_{k=1}^{K} L(M_k -1)$ constraints for a multi-cell multi-carrier NOMA system. 
Consider that $\mathbf{p}_{1}$ and $\mathbf{p}_{2}$ are two power vectors as in (\ref{power:vec}). 
If $\mathbf{p}_{1}$ satisfies (\ref{eq:sicconst1}), then any vector $\mathbf{p}_{2}$, which is coordinate-wise lesser than $\mathbf{p}_{1}$, may not satisfy (\ref{eq:sicconst1}). The second term in (\ref{eq:sicconst1}) is always 
non-negative since $N$ is positive and $g_{k,2}^{l} \geq g_{k,1}^{l}$, with the chosen decoding order. If $(g_{k,2}^{l} g_{i,1}^{l} - g_{k,1}^{l} g_{i,2}^{l}) \geq 0,  \forall{i\in{\mathcal{K}\backslash\{k\}}}$, then every $\mathbf{p}\geq \mathbf{0}$ satisfies (\ref{eq:sicconst1}). Note that $\mathbf{0}$ is a zero vector of the same size of $\mathbf{p}$  
and the inequality is coordinate-wise. 
In Section~\ref{Simulation}, 
{we conduct a simulation to show that the assumption 
$(g_{k,2}^{l} g_{i,1}^{l} - g_{k,1}^{l} g_{i,2}^{l}) \geq 0$ is 
generally true with high probability in practical reference scenarios.} 
\vspace{-0.2cm}

\subsection{Problem Formulation}
\label{sub_Problem_Formulation}

The optimization problem considered in this work is formulated as follows:
\begin{eqnarray}\label{eq:opt_sol}
\max_{\mathbf{a},\mathbf{p}} & \displaystyle\sum_{k\in{\mathcal{K}}} \sum_{u\in{\mathcal{M}_k}}\sum_{l\in\mathcal{L}} a_{k,u}^l \log(1+\gamma_{k,u}^l) 
\vspace{-0.3cm}
\end{eqnarray}
\begin{eqnarray}
\mbox{subject to} & (\ref{eq:scpowerconst}), (\ref{eq:bspowerconst}), (\ref{eq:sicconst}), (\ref{eq:sicconst1})  \notag
\end{eqnarray}
where 
the constraint (\ref{eq:scpowerconst}) ensures that powers are non-negative and the per sub-carrier power constraint is met, 
constraint (\ref{eq:bspowerconst}) is the per BS power constraint, and 
constraints (\ref{eq:sicconst}) and (\ref{eq:sicconst1}) are due to SIC. 
The objective function (\ref{eq:opt_sol}) is the sum rate for multi-cell NOMA systems. 
Since $a_{k,u}^l\in\{0,1\}$, we can also 
write 
(\ref{eq:opt_sol}) as:
\begin{eqnarray} \label{eq:opt_sol1}
\max_{\mathbf{a},\mathbf{p}} & \log & \prod_{k\in{\mathcal{K}}} \prod_{u\in{\mathcal{M}_k}}\prod_{l\in\mathcal{L}}  \left(1+ a_{k,u}^l \gamma_{k,u}^l\right)
\vspace{-0.3cm}
\end{eqnarray}
\begin{eqnarray}
\mbox{subject to} & (\ref{eq:scpowerconst}), (\ref{eq:bspowerconst}), (\ref{eq:sicconst}),  (\ref{eq:sicconst1}). \notag
\end{eqnarray}
\vspace{-0.6cm}

\section{Optimal Power and Sub-carrier Allocation (OPSA)}
\label{p_Algorithm}

In this section, we propose an algorithm to solve the formulated problem. We start with the preliminary definitions and results.
\begin{definition}[Normal Set]
A set ${\mathcal{A}} \subset \mathbb{R}_{+}^{N}$ is a normal set if for any ${\mathbf{x}\in{\mathcal{A}}}$, $\{{\mathbf{x}'|\mathbf{x}'\leq{\mathbf{x}}}\}\subset\mathcal{A}$, where inequality is component-wise. 
\vspace{-0.1cm}
\end{definition}
Note that the intersection and the union of normal sets are normal. 
\begin{definition}[Box]
Given any vector $\mathbf{x} \in {\mathbb{R}_+^{n}}$, the hyper rectangle $[\mathbf{0},\mathbf{x}]=\{\mathbf{v}|\mathbf{0}\leq \mathbf{v} \leq \mathbf{x}\}$ is called a box.
\end{definition}
Note that a box is a normal set. 

\begin{definition}[Monotonic Optimization]
A monotonic optimization problem is a class of optimization problems that have the following formulation: \vspace{-0.2cm}
\begin{equation} \label{eq:monotonic}
    \max_{\mathbf{x}} \hspace{0.1 cm} f(\mathbf{x})
\end{equation}
\vspace{-0.2cm}
\begin{equation*}
    \mbox{subject to} \hspace{0.1 cm} \mathbf{x}\in{\mathcal{A}}
\end{equation*}
where $f$ is an increasing function, $\mathcal{A}$ is a normal set.
\end{definition}

\vspace{-0.3cm}
\subsection{Monotonic Optimization}
\label{sub_Monotonic_Optimization}

Following the approach proposed in \cite{qian2009mapel}, we replace the expression $(1+ a_{k,u}^l \gamma_{k,u}^l)$ in (\ref{eq:opt_sol1}) by 
a new variable $z_{k,u}^l$ and  re-write 
(\ref{eq:opt_sol1}) as: \vspace{-0.1cm}
 \begin{eqnarray} \label{eq:oth_opt}
\max_{\mathbf{z}} & \log & \prod_{k\in{\mathcal{K}}} \prod_{u\in{\mathcal{M}_k}}\prod_{l\in\mathcal{L}}  z_{k,u}^l \label{eq:zlogfunction}
\end{eqnarray}
\begin{eqnarray}
\vspace{-0.3cm}
\mbox{subject to} & \mathbf{z} \in \mathcal{Z} \notag
\end{eqnarray}
where $\mathbf{z}$ is the vector that comprises all the $z_{k,u}^{l}$ and 
sometimes we may simply write $z_i$ for its $i$-th component, 
where $i=1, 2, \ldots, MKL$. We call this vector as the SINR vector. Let $\mathcal{Z}$ be the set of all possible $\mathbf{z}$. Let us assume that $\mathbf{z}^{*}$ is an SINR vector solution of (\ref{eq:oth_opt}), and denote $\mathbf{a}^{*}$ and $\mathbf{p}^{*}$ are the corresponding power and SINR vectors  respectively. Let us denote $N_{1}$ be the number of components of $\mathbf{z}^{*}$ which take the value greater than $1$, 
{where $0 \leq N_{1} \leq MKL$.} When a component $\mathbf{z}^{*}$ is $1$, the corresponding components of $\mathbf{a}^{*}$ and $\mathbf{p}^{*}$ take the value $0$. When a component of $\mathbf{z}^{*}$ is greater than $1$, the corresponding component of $\mathbf{a}^{*}$ would take the value $1$. We rewrite $1+\gamma_{k,u}^{l}$ as $\frac{f_{i}(\mathbf{p})}{g_{i}(\mathbf{p})}$, where $f_{i}$ and $g_{i}$ represent the linear functions of $\mathbf{p}$ corresponding to the $i$-th component of $\mathbf{z}$. Thus, we get $N_{1}$ linear equations of the form $z_{i} g_{i} - f_{i} = 0$, $\forall i$, such that $z_{i} > 1$. The coefficients of $g_{i}$ and $f_{i}$ are all independent random channel gains and we can show that with probability $1$, all $N_{1}$ linear equations are linearly independent, implying that there is a unique $\mathbf{p}$ for every $\mathbf{z}$. The corresponding $\mathbf{p}$ for every $\mathbf{z} \in \mathcal{Z}$ must satisfy (\ref{eq:scpowerconst}), (\ref{eq:bspowerconst}), (\ref{eq:sicconst}),  (\ref{eq:sicconst1}).
Thus, from the optimal solution $\mathbf{z}^{*}$ of (\ref{eq:oth_opt}), we get the optimal user allocation $\mathbf{a}^{*}$ and optimal power allocation $\mathbf{p}^{*}$. We state and prove the following lemma.
\begin{lemma}\label{lemma:1}
If there exists two SINR vectors $\mathbf{z}_{1}$ and $\mathbf{z}_{2}$ such that $\mathbf{z}_{1} \leq \mathbf{z}_{2}$, then the corresponding power vectors $\mathbf{p}_{1}$ and $\mathbf{p}_{2}$ must satisfy the relation $\mathbf{p}_{1}\leq \mathbf{p}_{2}$.
\end{lemma}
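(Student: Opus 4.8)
The plan is to show that the map $\mathbf{z} \mapsto \mathbf{p}$ defined implicitly by the system of equations $z_i g_i(\mathbf{p}) - f_i(\mathbf{p}) = 0$ is monotone increasing. I would first recall the structure of these equations: for the $i$-th component corresponding to user $u$ in cell $k$ on sub-carrier $l$, we have $z_{k,u}^l = 1 + \gamma_{k,u}^l$, so that

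\begin{equation*}
z_{k,u}^l = 1 + \frac{g_{k,u}^l p_{k,u}^l}{\sum_{i > (\pi_k^l)^{-1}(u)} g_{k,u}^l p_{k,\pi_k^l(i)}^l + \sum_{j \neq k} g_{j,u}^l p_j^l + N}.
\end{equation*}

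Rearranging, $p_{k,u}^l$ is expressed as a (positive) linear combination of the interference terms (powers of other users and other cells) plus a positive constant multiple of $N$, with coefficients that are positive and depend only on channel gains and on $z_{k,u}^l - 1 \geq 0$. The key observation is that this coefficient is increasing in $z_{k,u}^l$, and every power variable appearing on the right-hand side enters with a non-negative coefficient.

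Next I would set up a monotone-iteration / fixed-point argument. Define the operator $T_{\mathbf{z}}(\mathbf{p})$ whose $(k,u,l)$ component is the right-hand side of the rearranged equation above, so that the power vector associated to $\mathbf{z}$ is the fixed point $\mathbf{p} = T_{\mathbf{z}}(\mathbf{p})$ (uniqueness of this fixed point is already established in the paragraph preceding the lemma via the linear-independence argument). I would verify two facts: (i) $T_{\mathbf{z}}$ is monotone in $\mathbf{p}$, i.e. $\mathbf{p} \leq \mathbf{p}'$ implies $T_{\mathbf{z}}(\mathbf{p}) \leq T_{\mathbf{z}}(\mathbf{p}')$, because all coefficients are non-negative; and (ii) $T_{\mathbf{z}}$ is monotone in $\mathbf{z}$, i.e. $\mathbf{z} \leq \mathbf{z}'$ implies $T_{\mathbf{z}}(\mathbf{p}) \leq T_{\mathbf{z}'}(\mathbf{p})$ for every fixed $\mathbf{p} \geq \mathbf{0}$, because increasing $z_{k,u}^l$ only increases the corresponding coefficient (and all power/noise terms it multiplies are non-negative). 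Combining (i) and (ii): starting the iteration $\mathbf{p}^{(0)} = \mathbf{0}$, $\mathbf{p}^{(n+1)} = T_{\mathbf{z}}(\mathbf{p}^{(n)})$ for $\mathbf{z} = \mathbf{z}_1$ and for $\mathbf{z} = \mathbf{z}_2$ simultaneously, an induction on $n$ shows $\mathbf{p}_1^{(n)} \leq \mathbf{p}_2^{(n)}$ at every step; passing to the limit (the iterates converge to the respective unique fixed points $\mathbf{p}_1, \mathbf{p}_2$) yields $\mathbf{p}_1 \leq \mathbf{p}_2$.

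The main obstacle I anticipate is making the fixed-point iteration rigorous: one must argue that the iteration actually converges to the unique $\mathbf{p}$ guaranteed by the earlier linear-algebra argument, rather than diverging. This requires either (a) invoking that the relevant $\mathbf{z}$ vectors lie in the feasible set $\mathcal{Z}$, so the fixed point exists and is finite, and using monotonicity plus boundedness (the iterates are non-decreasing and bounded above by the true solution) to get convergence; or (b) bypassing iteration entirely by writing the system in matrix form $(\mathrm{diag}(\mathbf{z}) - B) \mathbf{p} = \mathbf{c}$ for a non-negative matrix $B$ and non-negative vector $\mathbf{c}$ depending on $N$, and showing $(\mathrm{diag}(\mathbf{z}) - B)^{-1}$ is entrywise non-negative and entrywise non-increasing in $\mathbf{z}$ — a standard M-matrix / Neumann-series fact, since for feasible $\mathbf{z}$ the spectral radius of $\mathrm{diag}(\mathbf{z})^{-1} B$ is below one. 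Either route works; I would prefer route (b) as cleaner, deriving $\mathbf{p} = \sum_{n \geq 0} \mathrm{diag}(\mathbf{z})^{-1} (B\,\mathrm{diag}(\mathbf{z})^{-1})^n \mathbf{c}$ and noting every term is non-negative and term-by-term non-increasing in each $z_i$, which gives both the well-definedness and the desired monotonicity in one stroke.
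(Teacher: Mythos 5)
Your primary argument (route (a)) is correct, and it takes a genuinely different path from the paper. The paper proves the lemma by contradiction with a maximum-ratio argument: assuming some component of $\mathbf{p}_1$ exceeds that of $\mathbf{p}_2$, it selects the index $\alpha$ maximizing $p_{1i}/p_{2i}$, bounds the corresponding interference ratio by a second constant $b\le a$, and deduces $z_{1\alpha}\ge z_{2\alpha}$, contradicting $\mathbf{z}_1\le\mathbf{z}_2$. Your argument instead exploits the same structural fact in a constructive way: each $p_{k,u}^l$ equals $(z_{k,u}^l-1)/g_{k,u}^l$ times (interference $+\,N$), a non-negative combination of the other powers with coefficients increasing in $z$, so the affine operator $T_{\mathbf{z}}$ is monotone in both $\mathbf{p}$ and $\mathbf{z}$; iterating from $\mathbf{0}$, bounding the iterates above by the true fixed points, and invoking the uniqueness asserted just before the lemma gives $\mathbf{p}_1\le\mathbf{p}_2$. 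This is a valid alternative in the standard-interference-function style: it buys an explicit handle on the map $\mathbf{z}\mapsto\mathbf{p}$ (well-definedness and monotonicity together), at the cost of leaning on the uniqueness claim, whereas the paper's extremal-ratio argument is more elementary and avoids fixed-point machinery altogether.

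Your preferred route (b), however, is mis-set-up and as written proves the opposite direction. The system is not $(\mathrm{diag}(\mathbf{z})-B)\mathbf{p}=\mathbf{c}$ with $B,\mathbf{c}$ independent of $\mathbf{z}$: from $z_i g_i(\mathbf{p})=f_i(\mathbf{p})$ one gets $g_{k,u}^l p_{k,u}^l=(z_i-1)\left(\mathbf{b}_i^{T}\mathbf{p}+N\right)$, i.e. (on the components with $z_i>1$; those with $z_i=1$ have $p_i=0$) $\bigl(I-\mathrm{diag}\bigl(\tfrac{z_i-1}{g_i}\bigr)B\bigr)\mathbf{p}=\mathrm{diag}\bigl(\tfrac{z_i-1}{g_i}\bigr)N\mathbf{1}$, so that $\mathbf{p}=\sum_{n\ge 0}\bigl(\mathrm{diag}\bigl(\tfrac{z_i-1}{g_i}\bigr)B\bigr)^{n}\mathrm{diag}\bigl(\tfrac{z_i-1}{g_i}\bigr)N\mathbf{1}$, every term of which is entrywise non-decreasing in each $z_i$ --- which is what the lemma requires. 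With your form, the observation that $(\mathrm{diag}(\mathbf{z})-B)^{-1}$ is entrywise non-increasing in $\mathbf{z}$ is true as a matrix fact, but it would make $\mathbf{p}$ non-increasing in $\mathbf{z}$, the reverse of the claim; indeed your closing sentence asserts that ``non-increasing in each $z_i$'' yields ``the desired monotonicity,'' which is internally inconsistent. The repair is exactly the affine form above (the $\mathbf{z}$-dependence must multiply the off-diagonal coupling and the noise term through the factor $z_i-1$, not sit on the diagonal), after which the Neumann-series argument does deliver both well-definedness and the correct monotonicity; route (a) needs no such repair.
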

\begin{proof}
See Appendix~\ref{app:e}. 
\end{proof}

\begin{lemma}\label{lemma:2}
The set of SINR vectors, $\mathcal{Z}$ corresponding to the power vectors which satisfies, (\ref{eq:scpowerconst}), (\ref{eq:bspowerconst}), (\ref{eq:sicconst}), is a normal set.
\end{lemma}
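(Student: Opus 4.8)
The plan is to work directly from the definition of a normal set: I must show that whenever $\mathbf{z}\in\mathcal{Z}$ and $\mathbf{0}\le\mathbf{z}'\le\mathbf{z}$ componentwise, then $\mathbf{z}'\in\mathcal{Z}$. Since every component of an achievable SINR vector obeys $z_i=1+a_i\gamma_i\ge 1$, the content of the claim concerns comparison vectors $\mathbf{z}'$ all of whose components are at least $1$ (components of $\mathbf{z}'$ strictly below $1$ lie in the box $[\mathbf{0},\mathbf{z}]$ and are accounted for once this case is settled). So fix a feasible witness $(\mathbf{a},\mathbf{p})$ for $\mathbf{z}$, i.e. one satisfying (\ref{eq:scpowerconst}), (\ref{eq:bspowerconst}), (\ref{eq:sicconst}) with $z_i=1+a_i\gamma_i(\mathbf{p})$, and consider $\mathbf{z}'$ with $\mathbf{1}\le\mathbf{z}'\le\mathbf{z}$.

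First I would construct a candidate witness for $\mathbf{z}'$. Set $a_i'=1$ if $z_i'>1$ and $a_i'=0$ if $z_i'=1$; for the active indices the equalities $z_i'=f_i(\mathbf{p}')/g_i(\mathbf{p}')$ form the linear system $z_i'g_i(\mathbf{p}')-f_i(\mathbf{p}')=0$, which, exactly as argued in the paragraph preceding Lemma~\ref{lemma:1}, has a unique solution $\mathbf{p}'$ with probability $1$, the inactive coordinates of $\mathbf{p}'$ being $0$. Because $a_i'\le a_i$ for every $i$ (no user that was off is switched on, and the active set can only shrink when $\mathbf{z}'\le\mathbf{z}$ pushes some component down to $1$), constraint (\ref{eq:sicconst}) for $\mathbf{a}'$ is inherited from that for $\mathbf{a}$: $\sum_{u\in\mathcal{M}_k}{a'}_{k,u}^{l}\le\sum_{u\in\mathcal{M}_k}a_{k,u}^{l}\le M$.

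Next I would invoke Lemma~\ref{lemma:1}: since $\mathbf{z}'\le\mathbf{z}$, the associated power vectors satisfy $\mathbf{p}'\le\mathbf{p}$, with $\mathbf{p}'\ge\mathbf{0}$. Hence for every $k,l$ we get $0\le\sum_{u\in\mathcal{M}_k}{p'}_{k,u}^{l}\le\sum_{u\in\mathcal{M}_k}p_{k,u}^{l}\le\bar p_k^l$, so (\ref{eq:scpowerconst}) holds, and the per-BS budget (\ref{eq:bspowerconst}) is preserved as well, since the total transmit power of each BS only decreases (and (\ref{eq:bspowerconst}) in any case constrains only the fixed budgets $\bar p_k^l,\bar p_k$). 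Thus $(\mathbf{a}',\mathbf{p}')$ is feasible for (\ref{eq:scpowerconst}), (\ref{eq:bspowerconst}), (\ref{eq:sicconst}) and realizes $\mathbf{z}'$, so $\mathbf{z}'\in\mathcal{Z}$ and $\mathcal{Z}$ is normal.

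The crux is entirely carried by Lemma~\ref{lemma:1}: the delicate point is that lowering one user's target SINR changes the interference seen by \emph{every} other user, so one cannot just scale powers down coordinate by coordinate — it is precisely the monotonicity of the SINR-to-power map, together with well-posedness of the linear system that defines $\mathbf{p}'$, that makes the feasibility check work. A second point I would state explicitly is why the SIC condition (\ref{eq:sicconst1}) is \emph{deliberately excluded} here: the discussion right after Theorem~\ref{th:sicconstraint} already shows a coordinatewise-smaller power vector need not satisfy (\ref{eq:sicconst1}), so normality can only be asserted for the region cut out by (\ref{eq:scpowerconst})–(\ref{eq:sicconst}), which is exactly what the lemma claims.
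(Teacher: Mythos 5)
Your proof is correct and follows essentially the same route as the paper's: invoke the one-to-one correspondence between SINR and power vectors, apply Lemma~\ref{lemma:1} to get $\mathbf{p}'\le\mathbf{p}$, and observe that the constraints (\ref{eq:scpowerconst}), (\ref{eq:bspowerconst}), (\ref{eq:sicconst}) are preserved under a coordinatewise decrease of power (and allocation). You merely spell out details the paper labels ``straightforward'' — the explicit construction of $\mathbf{a}'$ and the remark on why (\ref{eq:sicconst1}) is excluded — so no substantive difference remains.
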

\begin{proof}
See Appendix~\ref{app:f}. 
\end{proof}


We propose 
an optimal algorithm for the problem, when (\ref{eq:sicconst1}) holds true for any $\mathbf{p}\geq\mathbf{0}$. As we have seen in the discussion after Theorem~\ref{th:sicconstraint} and also from simulation result Fig.~\ref{cell_radi}, this is generally true in a practical setting with high probability. Thus, from Lemma~\ref{lemma:2}, we say that $\mathcal{Z}$ is a normal set. 

Let us 
define $f(\mathbf{z})\triangleq  \log \Pi_{k\in{\mathcal{K}}} \Pi_{u\in{\mathcal{M}_k}}\Pi_{l\in\mathcal{L}}  (z_{k,u}^l)^{w_{u}^k}$. 

\begin{lemma} \label{lem:fLC}
The function $f(\mathbf{z})$
is Lipschitz continuous.
\end{lemma}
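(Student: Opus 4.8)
The plan is to reduce the claim to the elementary fact that $t\mapsto\log t$ has bounded derivative on $[1,\infty)$. First I would put $f$ in additive form: since the logarithm turns the product into a sum,
\[
  f(\mathbf{z}) = \sum_{k\in\mathcal{K}}\sum_{u\in\mathcal{M}_k}\sum_{l\in\mathcal{L}} w_u^k\,\log z_{k,u}^l ,
\]
so $f$ is a finite weighted sum (with the fixed positive weights $w_u^k$) of the scalar maps $t\mapsto\log t$ composed with the coordinate projections $\mathbf{z}\mapsto z_{k,u}^l$.

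Next I would pin down the domain. Every component of an SINR vector satisfies $z_{k,u}^l = 1 + a_{k,u}^l\gamma_{k,u}^l \geq 1$, because $a_{k,u}^l\in\{0,1\}$ and $\gamma_{k,u}^l\geq 0$; moreover, by the power constraints (\ref{eq:scpowerconst})--(\ref{eq:bspowerconst}) the SINRs are bounded above, so $\mathcal{Z}$ is contained in a box $[\mathbf{1},\mathbf{z}_{\max}]$ with $\mathbf{z}_{\max}$ finite. Only the lower bound is needed: on $[1,\infty)$ the derivative of $\log$ is $1/t\in(0,1]$, hence by the mean value theorem $|\log a-\log b|\leq|a-b|$ for all $a,b\geq 1$. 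Combining this with the additive form, for any two SINR vectors $\mathbf{z},\mathbf{z}'$,
\[
  |f(\mathbf{z})-f(\mathbf{z}')| \leq \sum_{k,u,l} w_u^k\,\bigl|\log z_{k,u}^l-\log (z')_{k,u}^l\bigr| \leq \Bigl(\max_{k,u} w_u^k\Bigr)\,\|\mathbf{z}-\mathbf{z}'\|_1 ,
\]
which is the desired Lipschitz bound (valid on all of $\{\mathbf{z}\geq\mathbf{1}\}$, hence in particular on $\mathcal{Z}$); passing to the Euclidean norm only rescales the constant by the dimension factor $\sqrt{MKL}$.

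The one point that must not be overlooked --- rather than a genuine obstacle --- is establishing that the domain is bounded away from the origin, i.e.\ $z_{k,u}^l\geq 1$, so that $\log$ has bounded slope there; without this the statement would be false, since $\log$ is not Lipschitz on $(0,\infty)$. Everything else (the additive rewriting, the mean value estimate, and summing over the finitely many coordinates) is routine.
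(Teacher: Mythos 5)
Your proof is correct and rests on exactly the same core estimate as the paper's, namely that $\log$ is $1$-Lipschitz on $[1,\infty)$ (the paper derives $|\log x_1 - \log x_2|\leq |x_1-x_2|$ from $\log x\leq x-1$, you from the mean value theorem), applied coordinate-wise with the weights and measured in the $L_1$ norm. The only difference is cosmetic: you sum over the finitely many coordinates directly with constant $\max_{k,u} w_u^k$, whereas the paper packages the same summation as an induction on the dimension with constant $k_n=\max(k_{n-1},w_n)$.
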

\begin{proof}
See Appendix~\ref{app:lip}. 
\end{proof}

The problem \eqref{eq:oth_opt} is a monotonic optimization problem as $\mathcal{Z}$ is a normal set and $f(\mathbf{z})$ is an increasing function of $\mathbf{z}$. To solve 
\eqref{eq:oth_opt},  
we employ the 
outer polyblock approximation algorithm  \cite{hadjisavvas2006handbook}. 
We solve (\ref{eq:oth_opt}) in two steps. First, we find the optimal sub-carrier allocation then the optimal power allocation by employing the polyblock approximation algorithm. Next, we propose a theorem which is essential to find the optimal sub-carrier allocation.
\begin{theorem}\label{th:2}
If the power distribution over all the $K$ BS for the $l$-th sub-carrier is $(p_{1}^{l}, \cdots, p_{K}^{l})$, then the optimal choice is to assign all the power given to 
each BS to a user with the highest link gain   
in that sub-carrier, i.e., let $u^\star=\arg\max_u g_{k,u}^l$ (with any tie-breaking rule), then
\begin{eqnarray} \label{eq:red_poly_a}
a_{k,u}^l &=& \begin{cases}
1 & \text{if } u=u^\star \\
0 & \text{otherwise}
\end{cases}
\end{eqnarray}
and 
\begin{eqnarray} \label{eq:red_poly_p}
p_{k,u}^l &=& \begin{cases}
p_{k}^l & \text{if } u=u^\star \\
0 & \text{otherwise.}
\end{cases}
\end{eqnarray}
\end{theorem}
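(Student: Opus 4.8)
The plan is to exploit two facts. First, fixing the per-BS powers $(p_1^l,\dots,p_K^l)$ on sub-carrier $l$ freezes every inter-cell interference term, and the objective \eqref{eq:opt_sol} then decouples over the pairs $(k,l)$: each summand $a_{k,u}^l\log(1+\gamma_{k,u}^l)$ involves only powers transmitted on sub-carrier $l$, and by \eqref{eq:gamma} changing cell $k$'s \emph{intra}-cell split on sub-carrier $l$ while keeping the sum $p_k^l$ fixed leaves all SINRs of the \emph{other} cells on that sub-carrier unchanged. Hence it suffices to fix one pair $(k,l)$ and maximize its contribution over $\{p_{k,u}^l\}_{u\in\mathcal M_k}$ subject to $\sum_u p_{k,u}^l=p_k^l$.

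I would then abbreviate, for $u\in\mathcal M_k$, the now-constant quantity $I_u\triangleq\sum_{j\in\mathcal K\setminus\{k\}}g_{j,u}^l p_j^l+N$, and write $\pi=\pi_k^l$, $g_i=g_{k,\pi(i)}^l$, $p_i=p_{k,\pi(i)}^l$, $I_i=I_{\pi(i)}$; since $\pi$ lists the users of cell $k$ by increasing link gain, we may take $u^\star=\pi(M_k)$. For a fixed power vector the objective is largest when $a_{k,u}^l=\mathbbm{1}\{p_{k,u}^l>0\}$ (turning on a zero-power $a$ adds $\log 1=0$; the multiplexing cap \eqref{eq:sicconst} can only lower a competitor's value and is trivially met by a single-user allocation), so the contribution of $(k,l)$ is at most
\[
R_k^l=\sum_{i=1}^{M_k}\log\!\Big(1+\frac{g_i p_i}{g_i\sum_{j>i}p_j+I_i}\Big)=\sum_{i=1}^{M_k}\log\frac{g_iS_i+I_i}{g_iS_{i+1}+I_i},
\]
where $S_i=\sum_{j\ge i}p_j$ (so $S_1=p_k^l$ and $S_{M_k+1}=0$); the candidate allocation \eqref{eq:red_poly_a}--\eqref{eq:red_poly_p} is feasible and attains $R_k^l$ at $p_{M_k}=p_k^l$, $p_i=0$ otherwise, i.e.\ the value $\log(1+g_{M_k}p_k^l/I_{M_k})$.

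The heart of the argument is an exchange step: for $i<M_k$, moving \emph{all} of user $\pi(i)$'s power onto user $\pi(i+1)$ never decreases $R_k^l$. Such a move increases only $S_{i+1}$, by $\delta=p_i$, hence changes only the $i$-th and $(i{+}1)$-th summands, and a direct computation gives
\[
\Delta R_k^l=\log\frac{g_{i+1}(S_{i+1}+\delta)+I_{i+1}}{g_{i+1}S_{i+1}+I_{i+1}}-\log\frac{g_{i}(S_{i+1}+\delta)+I_{i}}{g_{i}S_{i+1}+I_{i}}.
\]
Because $1+\tfrac{g\delta}{gx+I}=1+\tfrac{\delta}{x+I/g}$ is nondecreasing in $g/I$ for $x,\delta\ge 0$, we get $\Delta R_k^l\ge 0$ exactly when $g_i/I_i\le g_{i+1}/I_{i+1}$ --- and this is precisely the feasible-SIC inequality of Theorem~\ref{th:sicconstraint}, equation \eqref{eq:sicconst1}, written for the consecutive pair $(\pi(i),\pi(i+1))$ (recall \eqref{eq:sicconst1} reduces to $g_{i+1}I_i\ge g_i I_{i+1}$ and depends only on the fixed powers $p_j^l$, $j\ne k$). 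Iterating this exchange for $i=1,2,\dots,M_k-1$ drives all of $p_k^l$ onto $\pi(M_k)=u^\star$ without ever decreasing $R_k^l$, while every intermediate allocation still satisfies \eqref{eq:scpowerconst}, \eqref{eq:bspowerconst}, \eqref{eq:sicconst}, \eqref{eq:sicconst1} (the last unaffected since it involves only other cells' powers). Repeating the argument over all $(k,l)$ gives \eqref{eq:red_poly_a}--\eqref{eq:red_poly_p} as an optimal response to the given power profile.

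I expect the main obstacle to be the very feature that separates the multi-cell problem from classical single-cell NOMA: the effective noises $I_u$ are user-dependent, so $R_k^l$ does \emph{not} telescope into a single logarithm and power concentration is not unconditionally optimal. The non-obvious point that unlocks the proof is that the SIC-feasibility condition \eqref{eq:sicconst1} is not just a side constraint but is literally the monotonicity $g_i/I_i\le g_{i+1}/I_{i+1}$ needed for the exchange inequality; thus over the feasible set the claim holds, and if \eqref{eq:sicconst1} fails for the given profile then no feasible intra-cell allocation exists and the statement is vacuous. A secondary bookkeeping point is the reduction to $R_k^l$: one must check that for fixed $\mathbf p$ the best choice of $\mathbf a$ yields at most $R_k^l$, so that it is legitimate to compare the single-user candidate against $R_k^l$ rather than against the full mixed-integer objective.
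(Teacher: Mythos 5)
Your proposal is correct and takes essentially the same route as the paper's proof: both fix the per-BS powers so the inter-cell interference terms are constants, telescope the per-cell, per-sub-carrier sum rate into ratio terms, and use the SIC-feasibility condition \eqref{eq:sicconst1} (equivalently $g_i/I_i \le g_{i+1}/I_{i+1}$ with noise absorbed into $I$) to conclude that concentrating all of $p_k^l$ on the strongest user cannot decrease the objective. The only difference is presentational: you phrase the monotonicity as discrete pairwise power-exchange steps (and are slightly more explicit about the choice of $\mathbf{a}$ for fixed $\mathbf{p}$), whereas the paper differentiates each telescoped ratio term with respect to the cumulative power and signs the derivative using \eqref{eq:sicconst1}.
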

\begin{proof} 
See Appendix~\ref{app:poly_reduced}. 
\end{proof}

Using Theorem \ref{th:2}, we can get the optimal sub-carrier allocation. In each sub-carrier and in each BS, we choose the user with the best channel gain to get the optimal sub-carrier allocation $\mathbf{a}^{*}$. To obtain the coordinates of $\mathbf{a^{*}}$, which correspond to the best channel gain user in each sub-carrier and in each BS, we put $1$, while to the other coordinates, we put $0$.

Now, we apply the polyblock approximation algorithm to get an optimal power allocation $\mathbf{p}^{*}$ (see Algorithm~\ref{alg:IA}). 
We first create an initial SINR vector (see line 3), where the maximum sub-carrier power is allocated to the active user 
in the 
sub-carrier allocation 
while interference is ignored. 
This vector is clearly infeasible and a box defined by this vertex covers the feasible set of SINR vectors defined by power constraints  \eqref{eq:scpowerconst} and \eqref{eq:bspowerconst}. We project (see line 8) this vector on the feasible set using Dinkelbach algorithm \cite{zappone2015energy} (see Algorithm~\ref{alg:Dinkelbach}). From the projected vector, we construct new 
SINR vectors (see line 9) and then remove the parent vector 
(see line 10).  
We repeat 
the above procedure until the objective value of the 
projected 
is $\epsilon$-close to the objective value of the projected vector (
see line 15), where $\epsilon$ is an input parameter of the algorithm.
Since the function $f(\mathbf{z})$ is a Lipschitz continuous function (see Lemma~\ref{lem:fLC}), the polyblock algorithm converges to an $\epsilon$-optimal solution in a finite number of iterations \cite{zhang2013monotonic}. 

\begin{algorithm}[t]
\begin{algorithmic}[1]
        \State \textbf{Inputs}: $\epsilon$, $g_{k,u}^l$, $\bar{p}_{k}$, $\bar{p}_{k}^{l}$, $\forall l\in\mathcal{L}$, $\forall u \in \mathcal{M}_k$, $\forall k \in{\mathcal{K}}$, $N$, $M$
        \State \textbf{Define}:  $f^{\star}\gets -\infty$ 
         \State \textbf{Initialization} : For the optimal sub-carrier allocation $\mathbf{a}^{*}$, define an SINR vector $\mathbf{z}_{\mathbf{a}^{*}}$ such that $z_{k,u}^l\gets 1+g_{k,u}^l \bar{p}_k/N$ if $a_{k,u}^l=1$ and $z_{k,u}^l\gets 0$ otherwise.
         \State $\mathcal{V}\gets \{\mathbf{z}_{\mathbf{a}^{*} }\} $
         \State $f^{\star}\gets -\infty$ 
        \REPEAT 
        \State  
         $\mathbf{z} \gets 
         \arg\max_{\mathbf{\mathbf{z}_{\mathbf{a}}}\in{\mathcal{V}}}  f(\mathbf{z}_{\mathbf{a}})$
         \State [$\pi_z({\mathbf{z}}), \lambda$]$\gets \text{Algorithm}~\ref{alg:Dinkelbach}$
         \State $\mathbf{z}_i \gets \mathbf{z} - (\mathbf{z}-\pi_{z}(\mathbf{z}))\circ  \mathbf{e}_i$, $\forall i$, where $\mathbf{e}_i$ is the $i$-th basis vector, $\circ$ is the point-wise multiplication.
         \State $\mathcal{V}\gets \mathcal{V}-\{\mathbf{z}\}\cup_i \mathbf{z}_i$, $\forall {i}$, $ \mbox{subject to~}$ $a_i=1$, $p_i > 0$   \IF{$f(\pi_{z}({\mathbf{z}})) \geq f^{\star}$}
         \State $f^{\star}\gets f(\pi_{z}({\mathbf{z}}))$ 
         \State $\mathbf{z}^{\star}\gets \pi_{z}({\mathbf{z}})$
         \ENDIF
         \UNTIL$|f(\mathbf{z}) - f^{\star}| \leq \epsilon$
         \State \textbf{Output:} $\mathbf{z}^{\star}$
\end{algorithmic}
\caption{{Polyblock approximation algorithm}}
\label{alg:IA}
\end{algorithm}

\begin{algorithm}{}
\begin{algorithmic}[1]
        \State \textbf{Inputs}: $\mathbf{z}$, $\epsilon$, $g_{k,u}^l$, $\bar{p}_{k}$, $\bar{p}_{k}^{l}$, $\forall l\in\mathcal{L}$, $\forall u \in \mathcal{M}_k$, $\forall k \in{\mathcal{K}}$, $N$, $M$
        \State \textbf{Initialization}: $\mathbf{p} \gets \mathbf{0}$ 
        \State Compute $\mathbf{r}$ corresponding to $\mathbf{p}$, using \eqref{eq:ni} and \eqref{eq:di}
        \REPEAT
        \State $\lambda \gets \min_i{\left(\frac{n_i}{d_i z_i}\right)}$
        \State $\mathbf{p} \gets \argmax_{\mathbf{p}\in{\mathcal{P}}} \min_{i} n_{i} - \lambda d_i z_i$
        \State Compute $\mathbf{r}$ and $\mathbf{z}$ using $\mathbf{p}$ 
         \UNTIL{$\min_i {n_i - \lambda {d_i} z_i}\geq{0}$}
         \State \textbf{Output:}[$\lambda \mathbf{z} , \lambda]$ 
        
\end{algorithmic}
\caption{Dinkelbach algorithm}
\label{alg:Dinkelbach}
\end{algorithm}

Note that the projection of $\mathbf{z}$ in Algorithm~\ref{alg:IA} (see line 8) is performed by the Dinkelbach algorithm  
(see Algorithm~\ref{alg:Dinkelbach}). From a feasible power vector $\mathbf{p}$, we define a vector $\mathbf{r}$ of 
length $MKL$, whose components are fractions of the allocated powers, as follows: 
\begin{eqnarray}
\label{eq:ri}
r_i=n_i/d_i, \forall i,
\end{eqnarray}
where 
\begin{eqnarray}
& n_i = g_{k,u}^l p_{k,u}^l +   g_{k,u}^l \sum_{j\in{\mathcal{M}_{k}},j={\pi_{k}^{l}(u)+1}} p_{k,j}^l + 
~~~~~~ \nonumber \\ & ~~~~~~~~~~~~~~~~~~~~~~~~~~~~~~~
\sum_{i\in\mathcal{K}\backslash\{k\}} g_{i,u}^{l} p_{i}^{l}+N,  \label{eq:ni} \\   
& d_i =  g_{k,u}^l  \sum_{j\in{\mathcal{M}_{k}},j={\pi_{k}^{l}(u)+1}} p_{k,j}^l +
~~~~~~~~~~~~~~~~~~~ \nonumber \\ & ~~~~~~~~~~~~~~~~~~~~~~~~~~~~~~~
\sum_{i\in\mathcal{K}\backslash\{k\}} g_{i,u}^{l} p_{i}^{l}+N, \label{eq:di}
\end{eqnarray}
and $i$ is mapped to the triplet $(k,u,l)$ corresponding to 
BS $k$, user $u$ and sub-carrier $l$. 

\section{Simulation Result}
\label{Simulation}

In our numerical studies, we consider  two neighboring BS, with each BS having $2$ sub-carriers and $3$ users. The SIC system constraint parameter $M$ in (\ref{eq:sicconst}) is set to $2$. We consider hexagonal cell of radius equal to 100 meters and the users are dropped in each cell randomly following a uniform distribution. 
We follow the radio propagation model of \cite{Greentouch} with distance-dependent path loss $128.1 + 37.6 \log_{10} d$, where $d$ is the distance between a BS and the user. We consider that the sub-carrier bandwidth equals to 1 MHz while the noise spectral density is -174 dBm/Hz.   

To begin with, we show the empirical cumulative distribution function (CDF) of the quantity $(g_{k,2}^{l} g_{i,1}^{l} - g_{k,1}^{l}  g_{i,2}^{l})$ discussed in Theorem 1, for cell radii of 100, 200 and 500 meters. As shown in Fig.~\ref{cell_radi}, we can see that the probability that $(g_{k,2}^{l} g_{i,1}^{l} - g_{k,1}^{l} g_{i,2}^{l})\geq0$ is very high for these reference scenarios. 
Thus, we can 
suggest that with very high probability the constraint (\ref{eq:sicconst1}) 
can hold for every $\mathbf{p}\geq \mathbf{0}$.

In Fig.~\ref{fig:nomaproj}, we show the sum rate obtained by the proposed algorithm with $\epsilon$ set to $0.1$, $0.5$ and $1$, respectively. Meanwhile, we vary the maximum transmit  power per sub-carrier value  $\bar{p}_k^l$. Besides, we compare with the algorithm proposed in \cite{yang2018power}, which is a state-of-the-art heuristic optimization algorithm for multi-cell NOMA system. Result shows that the proposed algorithm can outperform the reference algorithm \cite{yang2018power} and provide more optimal solution. 

In Fig.~\ref{fig:nomaproj1}, we plot the average running time taken by the proposed algorithm in parallel to that taken by the algorithm of \cite{yang2018power} to compare their computational complexity.\footnote{{It is run on a common computer with Windows 10, 64-bit OS, Intel Core i5-4590 CPU at 3.30GHz with 16 GB of RAM installed.}} 
As expected, with a larger $\epsilon$ (which is the approximation parameter), the time taken by the proposed algorithm would be shorter and is getting closer to the time taken by the algorithm of \cite{yang2018power}. 
However, we can see that for 
$\epsilon = 0.1$ to $1$, overall the time complexity of the proposed algorithm is comparable to that of \cite{yang2018power}. 
Result shows that it would be suitable to set the proposed algorithm with a small $\epsilon$ such as 0.1 at the reasonable time cost for finding a more optimal solution to the sum rate maximization problem. 

\begin{figure}[t]
\centering
	\includegraphics[height = 6.8cm, width = 0.5\textwidth]{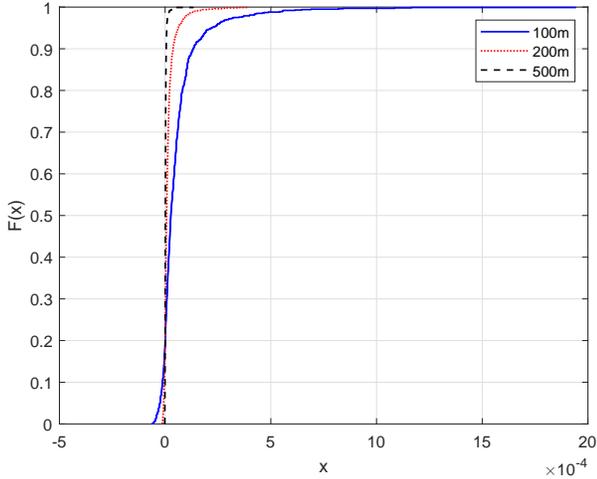}	
	\caption{CDF plot of $(g_{1,2}^l g_{2,1}^l - g_{1,1}^l g_{1,2}^l)$ with cell radius equal to 100, 200 and 500 meters, respectively.\vspace{-0.3cm}}
	\label{cell_radi}
\end{figure}

\begin{figure}[t]
	\centering
	\includegraphics[height = 6.8cm, width = 0.5\textwidth]{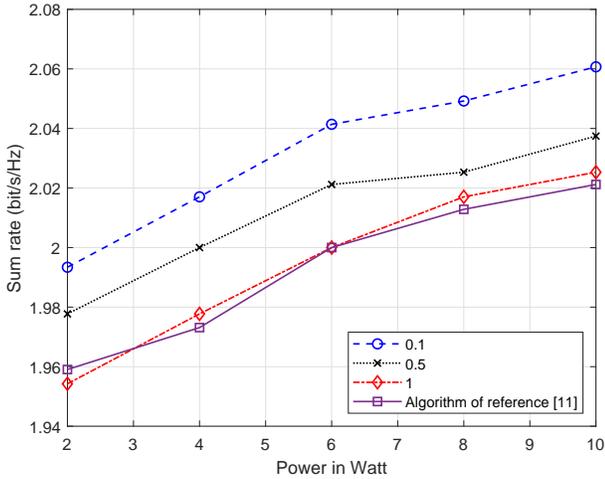}	
	\caption{Sum rate evaluation of NOMA system with varying maximum transmit power per sub-carrier.\vspace{-0.3cm} }
	\label{fig:nomaproj}
\end{figure}

\begin{figure}[t]
	\centering
	\includegraphics[height = 6.8cm, width = 0.5\textwidth]{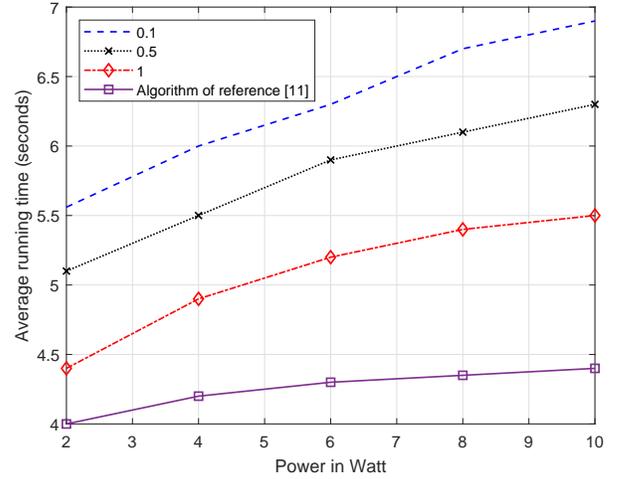}	
	\caption{Average running time of the proposed algorithm in comparison to that of reference [11].}  \vspace{-0.2cm} 
	\label{fig:nomaproj1}
\end{figure}

\section{Conclusion}
\label{s_Conclusion}

In this paper, we have proposed a global optimal algorithm for the joint power and sub-carrier allocation under multi-cell NOMA systems. Our scheme is based on the polyblock approximation algorithm. 
It has reduced complexity due to the obtained necessary and sufficient condition for feasible SIC.  
In practice, we can further adjust the approximation parameter to serve as benchmark solution or to provide suitable solution for solving the multi-cell multi-carrier NOMA resource allocation problem. 
Simulation result and the comparative study have also shown the effectiveness of the proposed scheme.

\bibliography{refer.bib}
\bibliographystyle{IEEEtran}


\newpage

\begin{appendices}
\label{Appendix}

\section{Proof of Theorem~\ref{th:sicconstraint}} \label{app:sicconstraint}

\begin{proof}
In the literature of power allocation for NOMA systems and also in this paper, if a BS assigns power $p$ to a user, then the achievable rate at the user is given by $\log\Big(1+\frac{g\cdot p} {N_{x} + I_{x}}\Big)$, where $I_{x}$ denotes all the interference seen by a user $x$ and $N_{x}$ is the power of an additive white Gaussian noise at user $x$. Let us consider two users indexed as $1$ and $2$, who are served by BS $k$ on sub-carrier $l$, and  $g_{k,1}^l< g_{k,2}^l$. The achievable rate at user 1 is given by $R_{1}=\log\Big(1+\frac{g_{k,1}^l p_{k,1}^{l}}{I_{1}+N_{1}}\Big)$. Similarly, the achievable rate at user 2 is given by $R_{2}=\log\Big(1+\frac{g_{k,2}^l p_{k,2}^{l}}{I_{2}+N_{2}}\Big)$. With SIC, denote the SINR of user 1 by $\gamma_1$ and the SINR at user 2 by  $\gamma_2$ such that 
\begin{equation*}
 \gamma_1=\frac{g_{k,1}^l p_{k,1}^l}{g_{k,1}^l \sum_{j\in{\mathcal{M}_{k}},j={\pi_{k}^{l}(1)+1}} p_{k,j}^l + \sum_{i\in\mathcal{K}\backslash\{k\}} g_{i,1}^{l} p_{i}^{l}+N}, 
\end{equation*}  
\begin{equation*}
 \gamma_2=\frac{g_{k,2}^l p_{k,1}^l}{g_{k,2}^l \sum_{j\in{\mathcal{M}_{k}},j={\pi_{k}^{l}(1)+1}} p_{k,j}^l + \sum_{i\in\mathcal{K}\backslash\{k\}} g_{i,2}^{l} p_{i}^{l}+N}. 
\end{equation*} 
As the SIC order is as such that user 2 should be able to decode the data of user 1, 
we know $\gamma_2$ has to be greater than or equal to $\gamma_1$, i.e., $\gamma_2-\gamma_1>0$, which leads to \eqref{eq:sicconst1} after simplification.
\end{proof}

\section{Proof of Lemma \ref{lemma:1}}\label{app:e}
\begin{proof}
 Let us assume that $\mathcal{I}$ be the set which stores those indices of $\mathbf{z}_{2}$ where it takes the value $1$, i.e., $i\in{\mathcal{I}}$ if $z_{2i}=1$. As,  $\mathbf{z}_{2}\geq \mathbf{z}_{1}$, the elements of $\mathbf{z}_{1}$ corresponding to the elements of $\mathcal{I}$ also take the value $1$, i.e., $z_{1i}=1,  \forall i\in\mathcal{I}$. Thus, the corresponding elements of both $\mathbf{p}_{1}$ and $\mathbf{p}_{2}$ take the value $0$ and for those elements the statement of lemma holds true.  We assume that $\mathcal{I}^{'}$ is the set which stores those indices of $\mathbf{z}_{2}$ where it take the value greater than $1$, i.e., $i\in{\mathcal{I}^{'}}$ if $z_{2i}>1$. In the following, we prove the statement of the lemma for these elements by contradiction. Let us assume that $\mathcal{I}^{''}$ be the set which stores the indices for which $p_{1i}\geq p_{2i}$ holds true. We define a real number $a$ and an integer $\alpha$, which are $\max_{i\in\mathcal{I}^{''}}\frac{p_{1i}}{p_{2i}}$ and $\argmax_{i\in\mathcal{I}^{''}}\frac{p_{1i}}{p_{2i}}$,  respectively. Now, let us assume that $\mathcal{I}^{'''}$ be the set which consists of all the indices of power vectors, for which the corresponding elements of the power vectors appear in the denominator of the $\alpha$-th element of SINR vector. We define another real number $b$ which is $\max_{i\in\mathcal{I}^{'''}} \frac{p_{1i}}{p_{2i}}$. Note that $a\geq b$. From the definition of $z_{i}$, we know $z_{i} = 1+ \gamma_{i}$. We define $int_{i}^{\mathbf{p}}$ as the summation of all interference signals term in $\gamma_{i}$ and $sig_{i}^{\mathbf{p}}$ as the desired signal term in $\gamma_{i}$ corresponding to power vector $\mathbf{p}$. We consider both the case of $b <1 $ and the case of $b\geq1$. When $b < 1$,  
 \begin{eqnarray}
 z_{1\alpha} &=& 1 + \frac{sig_{\alpha}^{\mathbf{p}_{1}}}{int_{\alpha}^{\mathbf{p}_{1}}+N} \nonumber \\
 &\overset{1} {\geq}& 1 + \frac{a \hspace{0.1 cm} sig_{\alpha}^{\mathbf{p}_{2}}}{b \hspace{0.1 cm} int_{\alpha}^{\mathbf{p}_{2}}+N} \nonumber \\
 &\overset{2} {\geq}& 1 + \frac{ sig_{\alpha}^{\mathbf{p}_{2}}}{ int_{\alpha}^{\mathbf{p}_{2}}+N} \nonumber \\
 &=& z_{2\alpha} 
 \end{eqnarray}
 which contradicts the assumption of the lemma that $\mathbf{z}_{2}\geq\mathbf{z}_{1}$. A  linear combination of all the elements of $\mathbf{p}_{2}$ corresponding to all the elements of $\mathcal{I}^{'''}$ is present at $int_{\alpha}^{\mathbf{p}_{2}}$ and from the definition of $b$ the first inequality holds true. The second inequality is true because of the fact that $\alpha\geq1$ and $b<1$. 
 Now we consider the case when $b\geq1$, 
 \begin{eqnarray}
 z_{1\alpha} &=& 1 + \frac{sig_{\alpha}^{\mathbf{p}_{1}}}{int_{\alpha}^{\mathbf{p}_{1}}+N} \nonumber \\
 &\overset{1} {\geq}& 1 + \frac{a \hspace{0.1 cm} sig_{\alpha}^{\mathbf{p}_{2}}}{b \hspace{0.1 cm} int_{\alpha}^{\mathbf{p}_{2}}+N} \nonumber \\
 &\overset{2} {\geq}& 1 + \frac{a \hspace{0.1 cm} sig_{\alpha}^{\mathbf{p}_{2}}}{b (\hspace{0.1 cm} int_{\alpha}^{\mathbf{p}_{2}}+N)} \nonumber \\
  &\overset{3} {\geq}& 1 + \frac{ \hspace{0.1 cm} sig_{\alpha}^{\mathbf{p}_{2}}}{\hspace{0.1 cm} int_{\alpha}^{\mathbf{p}_{2}}+N} \nonumber \\
  &=& z_{2\alpha}. 
 \end{eqnarray}
 Again it contradicts the assumption of the lemma. The second inequality and the third inequality hold true as $b\geq1$ and $\alpha\geq b$. This completes the proof. 
 \end{proof}

\section{Proof of Lemma \ref{lemma:2}}\label{app:f}
 \begin{proof}
It is to prove the lemma 
that if $\mathbf{z}_{1}\in{\mathcal{Z}}$, then every $\mathbf{z}_{2}$ which satisfies $\mathbf{z}_{2}\leq\mathbf{z}_{1}$ should also be in $\mathcal{Z}$. 
We have given an argument before that every power vector has an one-to-one correspondence with every SINR vector. Let us assume that the power vector corresponding to $\mathbf{z}_{1}$ is $\mathbf{p}_{1}$ and the power vector corresponding to $\mathbf{z}_{2}$ is $\mathbf{p}_{2}$. As  $\mathbf{z}_{1}\geq\mathbf{z}_{2}$, from Lemma~\ref{lemma:1} we can say that $\mathbf{p}_{1}\geq\mathbf{p}_{2}$. It is straightforward to see that if $\mathbf{p}_{1} \geq 0 $ and satisfies (\ref{eq:scpowerconst}), (\ref{eq:bspowerconst}) and (\ref{eq:sicconst}), then $\mathbf{p}_{2}$ also satisfies (\ref{eq:scpowerconst}), (\ref{eq:bspowerconst}) and (\ref{eq:sicconst}). Thus, $\mathbf{z}_{2}$ also is in $\mathcal{Z}$.
 \end{proof}
 
\section{{Proof of Lemma~\ref{lem:fLC}}}\label{app:lip}

\begin{proof}
To prove Lemma~\ref{lem:fLC}, it suffices to show that the function $f_n(\mathbf{x}) = \log (\Pi_{i=1}^{n} x_{i}^{w_{i}})$, where $\mathbf{x}=(x_i)_{i=1,\ldots,n}$ and $x_i\geq 1$, $\forall i$, is Lipschitz continuous. We prove this by mathematical induction. 
 
For $n=1$, let $x_1\geq 1$ and $x_{2}\geq 1$ be two scalars and a weight $w_1> 0$. Recall the  well-known logarithmic inequality $\log x \leq (x-1)$, for $x \geq 1$. Without loss of generality, assume that $x_{1}\geq  x_{2}$. Thus, we have:
 \begin{eqnarray}
 |f_1(x_1)-f_1(x_2)|& = & \left |\log \left (\frac{x_1}{x_2} \right) ^{w_1}\right | \notag \\
 & \leq & w_1 \left |\frac{x_1}{x_2}-1\right | \notag \\
 & = & \frac{w_1}{x_2}  |x_1-x_2| \notag \\
 & \leq & w_1 |x_2-x_1| \notag
 \end{eqnarray}

 We now assume that $f_{n-1}$ is Lipschitz continuous with constant $k_{n-1}>0$. Consider two vectors $\mathbf{x}_1$ and $\mathbf{x}_2$ in $\mathbb{R}^{n}$, where the coordinates $x_{1,i}>1$ and $x_{2,i}>1, \forall i$. Without loss of generality, consider that $x_{1,n}\geq x_{2,n}$. Also, assume that the weights $w_{i} > 0$, $\forall{i}\in{1,2,\ldots,n}$. Using induction hypothesis, we have:
 \begin{eqnarray}
 |f_n(\mathbf{x}_1)-f_n(\mathbf{x}_2)| &=&  \left |\sum_{i=1}^n \log x_{1,i}^{w_i}-\sum_{i=1}^n \log x_{2,i}^{w_i}\right | \notag \\
 &\leq & \left|\sum_{i=1}^{n-1} \log x_{1,i}^{w_i}-\sum_{i=1}^{n-1} \log x_{2,i}^{w_i}\right | \notag \\
 && + \left |w_n\log \frac{x_{1,n}}{x_{2,n}} \right | \notag \\ 
 &\leq & k_{n-1}\sum_{i=1}^{n-1}|x_{1,i}-x_{2,i}| \notag \\
 && + w_n|x_{1,n}-x_{2,n}| \notag \\
 &\leq & k_n ||\mathbf{x}_1-\mathbf{x}_2|| \notag
 \end{eqnarray}
 where $k_n=\max(k_{n-1},w_n)$ and 
 consider the $L_1$ norm  in $\mathbb{R}^{n}$. This completes the proof.  
\end{proof}

\section{Proof of Theorem~\ref{th:2}} \label{app:poly_reduced}

\begin{proof} 
Let us assume one sub-carrier allocation is as such, says the first sub-carrier of the first BS. 
As aforementioned, there are at most $M$ users per sub-carrier due to the limitation of SIC. 
Without loss of generality, let us 
assume that the first user has the 
highest channel gain and the second user has the second highest channel gain. 
We use $R_{1}^{1}$ to denote the sum rate in the first sub-carrier of the above first BS, 
which is expressible as: 
\begin{eqnarray*} 
    R_{1}^{1} & = & \log{\Big(1 + \frac{g_{1,1}^{1} p_{1,1}^{1}}{\sum_{i=2}^{K}g_{i,1}^{1} p_{i}^{1} + N}}\Big) + ... + \\ & & \log\Big({1 + \frac{g_{1,M}^{1} p_{1,M}^{1}}{\sum_{j=1}^{M-1}g_{1,M}^{1} p_{1,j}^{1}+ \sum_{i=2}^{K}g_{i,2}^{1} p_{i}^{1} + N}}\Big)\\
    & = & \log\Big({\frac{g_{1,1}^{1} p_{1,1}^{1} + \sum_{i=2}^{K} g_{i,1}^{1} p_{i}^{1} + N}{g_{1,2}^{1} p_{1,1}^{1}+ \sum_{i=2}^{K}g_{i,2}^{1} p_{i}^{1} + N}}\Big)    + ... +\\ & & \log{\Big(\sum_{j=1}^{M}g_{1,M}^{1} p_{1,j}^{1}+ \sum_{i=2}^{K}g_{i,M}^{1} p_{i}^{1} + N  }\Big)- \\ & &  \log{\Big(\sum_{i=2}^{K} g_{i,1}^{1} p_{i}^{1} + N \Big) }\\ \\
    & = & \log\Big({\frac{g_{1,1}^{1} p_{1,1}^{1} + \sum_{i=2}^{K} g_{i,1}^{1} p_{i}^{1} + N}{g_{1,2}^{1} p_{1,1}^{1}+ \sum_{i=2}^{K}g_{i,2}^{1} p_{i}^{1} + N}}\Big)  + ... + \\ & & \log{\Big(g_{1,M}^{1} p_{1}^{1}+ \sum_{i=2}^{K}g_{i,2}^{1} p_{i}^{1} + N }\Big)- \\ & & \log{\Big(\sum_{i=2}^{K} g_{i,1}^{1} p_{i}^{1} + N\Big)}. 
\end{eqnarray*}
We now show that the first term of the above expression
is an increasing function of $p_{1,1}^{1}$. Let us define:  
\begin{equation*}
    f(p_{1,1}^{1}) \triangleq \log\Big({\frac{g_{1,1}^{1} p_{1,1}^{1} + \sum_{i=2}^{K} g_{i,1}^{1} p_{i}^{1} + N}{g_{1,2}^{1} p_{1,1}^{1}+ \sum_{i=2}^{K}g_{i,2}^{1} p_{i}^{1} + N}}\Big). 
\end{equation*} 
Taking the derivative of the above function with respect to  $p_{1,1}^{1}$, we have: 
\begin{eqnarray*}
     \frac{d(f(p_{1,1}^{1}))}{d p_{1,1}^{1}}  = &  \\
    & \hspace{-1.5cm}\frac{g_{1,1}^{1} (g_{1,2}^{1} p_{1,1}^{1}+ \sum_{i=2}^{K} g_{i,2}^{1} p_{i}^{1} + N) -  g_{1,2}^{1}(g_{1,1}^{1} p_{1,1}^{1} + \sum_{i=2}^{K} g_{i,1}^{1} p_{i}^{1} + N)}{(g_{1,1}^{1} p_{1,1}^{1} + \sum_{i=2}^{K} g_{i,1}^{1} p_{i}^{1} + N) \times (g_{1,2}^{1} p_{1,1}^{1}+ \sum_{i=2}^{K}g_{i,2}^{1} p_{i}^{1} + N)} 
\end{eqnarray*} 
which is always greater than zero because of the 
SIC 
condition \eqref{eq:sicconst1}. We can do the same analysis and reach the same conclusion  for all the other terms except the last term and the second last term. For a given $\mathbf{p}$, the second last term and the last term are constant. Thus, at 
each particular iteration of the  polyblock algorithm, if the power distribution over all the BS for a particular sub-carrier $l$ is $(p_{1}^{l}, \ldots, p_{K}^{l})$, the optimal choice is to allocate all the power to the best user in that sub-carrier. This completes the proof. 
\end{proof}

\end{appendices}


\end{document}